\pgfplotsset{width=7cm,compat=1.9}
\newtheorem{theorem}{Theorem} % [section] %
\newtheorem{lemma}[theorem]{Lemma} %
\newtheorem{corollary}[theorem]{Corollary} %
\newtheorem{definition}[theorem]{Definition} %
\newtheorem{claim}[theorem]{Claim} %
\renewcommand{\cal}[1]{\mathcal{#1}}
\newcommand{\cvrp}{\textsc{Capacitated Vehicle Routing}\xspace}
\newcommand{\opt}{{\rm opt}}
\newcommand{\OPT}{{\rm OPT}}
\newcommand{\ssmall}{small}
\newcommand{\bbig}{big}
\newcommand{\radlower}{\mathcal{D}}
\newcommand{\rad}{\mathcal{D}'}
\newcommand{\eps}{\epsilon}
\newcommand{\cost}{cost}
\newcommand{\tsp}{\mathcal{A}}
\let\epsilon=\varepsilon %
\def\polylog{\mathop{\rm polylog}\nolimits}
\def\aux{\mathop{\rm G_{aux}}\nolimits}
\DeclareMathOperator*{\E}{\mathbb{E}}
\begin{document}

\title{Improved Approximations for CVRP with Unsplittable Demands}

\author[]{Zachary Friggstad\thanks{Supported by an NSERC Discovery Grant and NSERC Discovery Accelerator Supplement Award.} }
\author[]{Ramin Mousavi}
\author[]{Mirmahdi Rahgoshay}
\author[]{Mohammad R. Salavatipour \thanks{Supported by NSERC.}}

\affil[]{Department of Computing Science, University of Alberta, Edmonton, Canada.
\authorcr
  \{\tt zacharyf,mousavih,rahgosha,mrs\}@ualberta.ca}

\date{}

\maketitle

\begin{abstract}
In this paper, we present improved approximation algorithms for the (unsplittable) Capacitated Vehicle Routing Problem (CVRP) in general metrics.
In CVRP, introduced by Dantzig and Ramser (1959), we are given a set of points (clients) $V$ together with a depot $r$ in a metric space, with each $v\in V$ having a demand $d_v>0$, and a vehicle of bounded capacity $Q$. The goal is
to find a minimum cost collection of tours for the vehicle, each starting and ending at the depot, such that each client is visited at least once and the total demands of the clients in each tour is at most $Q$. In the unsplittable variant we study, the demand of a node must be served entirely by one tour.
We present two approximation algorithms for unsplittable CVRP: a combinatorial $(\alpha+1.75)$-approximation, where $\alpha$ is the approximation factor for the Traveling Salesman Problem, and an approximation algorithm based on LP rounding with approximation guarantee $\alpha+\ln(2) + \delta \approx 3.194 + \delta$ in $n^{O(1/\delta)}$ time. Both approximations can further be improved by a small amount when combined with recent work by Blauth, Traub, and Vygen (2021), who obtained an $(\alpha + 2\cdot (1 -\epsilon))$-approximation for unsplittable CVRP for some constant $\epsilon$ depending on $\alpha$ ($\epsilon > 1/3000$ for $\alpha = 1.5$).
\end{abstract}
%\sloppy
\section{Introduction}\label{intro}
%The Traveling Salesman Problem (TSP) and its variants are among the most well known and well studied problems in Combinatorial Optimization.
%Vehicle routing problems is a class of such problems.
Vehicle routing problems are among the most well known and well studied problems in Combinatorial Optimization.
The goal is generally to find cost-efficient delivery routes for delivering items from depots to clients in a network using vehicles. The Capacitated Vehicle Routing Problem (CVRP), introduced by Dantzig and Ramser in 1959 \cite{Dantzig}, generalizes the classic Traveling Salesman Problem and has numerous applications. In CVRP, we are given as input a complete graph $G=(V,E)$ with metric edge weights (also referred to as costs) $c(e)\in\mathbb{R}_{\ge 0}$,
a depot $r\in V$, and a vehicle with capacity $Q>0$, and wish to compute a minimum weight/cost collection of tours, each starting and ending at the depot and visiting at most $Q$ customers, whose union covers all the customers. In the more general setting, each node $v$ is given along with a demand $d(v)\in \mathbb{Z}_{\ge 1}$ and the goal is to find a set of tours of the minimum total cost, each of which includes $r$, such that the union of the tours covers the demand at every client and every tour serves at most $Q$ demand. 

There are three common versions of CVRP: \emph{unit}, \emph{splittable}, and \emph{unsplittable}. In the splittable variant, the demand of a node can be delivered using multiple tours so each tour must also specify how much demand it serves at each client\footnote{One can show using that restricting the demand served to each client by each tour to integer quantities does not change the optimum solution cost.}. However, in the unsplittable variant the entire demand of a client must be delivered by a single tour ({\em eg.} each demand is an indivisible good of a certain size). This obviously requires that $d_v\leq Q$ for all clients $v$. The unit demand case is a special case of the unsplittable case where every node has a unit demand, and the demand of a client must be delivered by a single tour.
It is easy to see that the splittable demand case can be reduced to the unit demand case in pseudo-polynomial time using multiple collocated clients of unit demands. However,
the unsplittable version is more challenging. For example, it contains the bin-packing problem as a special case; when all clients are have distance 1 from $r$ and distance 0 from each other.

CVRP has also been referred to as the $k$-tours problem \cite{Arora-Euclidean-PTAS,stoc/AsanoKTT97}.
Both the splittable and unsplittable versions admit constant factor approximation algorithms in polynomial-time. Haimovich and Kan \cite{Haimovich-Kan} showed that a heuristic, called iterative partitioning, yields an $(\alpha+1(1-1/Q))$-approximation
 for the unit demand case if one uses an $\alpha$-approximation for the Traveling Salesman Problem (TSP). A similar approach produces a
$2 + (1 - 2/Q)\alpha)$-approximation for the unsplittable variant \cite{ALTINKEMER1987149}. 
Despite their simplicity, these remained the best approximations for these two variants for over 35 years. Recently, Blauth et al. \cite{VygenIPCO} improved these approximations giving an $(\alpha + 2 \cdot (1 - \eps))$-approximation algorithm for unsplittable CVRP and a $(\alpha + 1 - \eps)$-approximation algorithm for unit demand CVRP and splittable CVRP where $\eps$ is a constant depending only on $\alpha$. For $\alpha = 3/2$, they showed $\eps > 1/3000$.
All the variants are APX-hard in general metric spaces \cite{Papadimitrio-Yannakakis}.

In this paper we make significant progress on improving the approximation guarantee for unsplittable CVRP. More specifically we present a simple combinatorial algorithm with ratio 3.25, and then a 3.194-approximation algorithm based on linear programming (LP). Our algorithms are completely independent of the improvements by Blauth et al. \cite{VygenIPCO}. By incorporating their approach, we can further improve both ratios by a small constant $\eps' > 0$. However, for the sake of simplicity we prefer to present our main results without factoring in this last improvement.

\begin{theorem}\label{thm:theorem325}
There is an approximation algorithm for the unsplittable CVRP with ratio $\alpha+1.75$, where $\alpha$ is
the best approximation ratio for TSP. 
\end{theorem}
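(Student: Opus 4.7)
The plan is to adapt the iterative tour partitioning (ITP) approach of Altinkemer--Gavish, separating clients by demand size so as to bound the algorithm's cost by $\alpha \cdot c(T^*) + 1.75 \cdot R$, where $T^*$ is an optimal TSP tour of $V \cup \{r\}$ and $R = \frac{2}{Q}\sum_{v \in V} d_v \, c(r, v)$ is the Haimovich--Kan radial lower bound. Combining this with the standard bound $\OPT \ge \max(c(T^*), R)$ then yields the ratio $\alpha + 1.75$.

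\textbf{Step outline.} First, I would partition $V$ into big clients $B = \{v : d_v > Q/2\}$ and small clients $S = V \setminus B$; since any two big clients together exceed capacity $Q$, each must occupy its own tour. Then, compute an $\alpha$-approximate TSP tour $T$ of $V \cup \{r\}$, so $c(T) \le \alpha \, c(T^*)$. For each $v \in B$ use a dedicated tour $(r, v, r)$ of cost $2c(r,v)$; since $d_v > Q/2$ we have $c(r, v) < \frac{2 d_v}{Q} c(r, v)$, and summing gives big-client cost strictly less than $2 R_B$, where $R_B = \frac{2}{Q}\sum_{v \in B} d_v c(r, v) \le R$. For small clients, shortcut the big clients out of $T$ (no cost increase, by the triangle inequality) to obtain a tour $T'$ on $S \cup \{r\}$ with $c(T') \le c(T)$, and then apply the classical shifted ITP on $T'$: for each shift $s$, greedily cut $T'$ into segments of total demand at most $Q$ (possible in a balanced way since each small demand is at most $Q/2$, so each segment fills to at least $Q/2$), serve each segment by a tour through $r$, and take the best shift. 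The standard averaging argument bounds the total detour cost over all shifts by roughly $R_S$.

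\textbf{Main obstacle.} The challenge is that naive summation yields a cost of roughly $\alpha \, c(T^*) + 2 R_B + R_S$, which can be as bad as $\alpha\, c(T^*) + 2R$ when $R_B$ dominates the radial lower bound. To tighten this to $1.75\, R$, a further refinement is needed. One natural route is a pairing step: for each big client $v$, attempt to ``subsidize'' its tour by attaching a nearby small client $u$ (feasible whenever $d_u + d_v \le Q$), which removes one segment from the small-client ITP and hence saves a corresponding detour. Another is a case analysis or reweighted charging scheme that exploits a trade-off --- if $R_B$ is close to $R$, then the many far-away big clients force the TSP tour $T$ to be long, providing slack in the $\alpha \, c(T^*)$ term, and vice versa. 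The crux of the proof is to extract the guaranteed $0.25\,R$ saving from such a mechanism, which is precisely what separates this result from the classical $\alpha + 2$ bound of Altinkemer--Gavish and what I expect to demand the most careful combinatorial argument.
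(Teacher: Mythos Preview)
Your plan has a genuine gap, and in fact a concrete error in the small-client analysis. The claim that shifted ITP over clients with $d_v \le Q/2$ yields detour cost $\approx R_S$ is not correct: the observation ``each segment fills to at least $Q/2$'' bounds only the number of segments, not the detour. For the averaging argument to charge a single round trip per bad client, the tiles must be short enough that a bad small client can still be appended to its segment without exceeding $Q$; with threshold $Q/2$ this forces tile length $Q/2$, hence bad-probability $2d_v/Q$ and total detour $2R_S$, not $R_S$. So even granting your big-client bound, the basic scheme gives $\alpha\cdot\opt + 2R_B + 2R_S = \alpha\cdot\opt + 2R$, which is exactly the Altinkemer--Gavish bound with no improvement. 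You correctly flag that the crux is extracting an additional $0.25R$ saving, but neither of your suggested mechanisms (attaching a small client to a big-client tour, or a TSP-length versus $R_B$ trade-off) is developed into an actual bound, and neither is what the proof uses.

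The paper proceeds via three ingredients absent from your outline. First, the threshold is set at $Q/3$ rather than $Q/2$, so up to \emph{two} big clients can share a tour; big clients are then served by a minimum-cost perfect matching $M$ (with self-loops allowed) on an auxiliary graph, and one has both $\cost(M) \le \opt$ and $\cost(M) \le \rad_{\bbig} := \sum_{v\,\bbig} 2c(r,v)$. Second, a refined ``$\delta$-tank'' tour-splitting lemma (tiles of length $1-\delta$ together with a reserve of size $\delta$) gives factor $\tfrac{1}{1-\delta}$ on $\radlower_{\ssmall}$ --- hence $\tfrac{3}{2}$ at $\delta=\tfrac{1}{3}$ --- and, applied to all clients at once, cost at most $\alpha\cdot\opt + \tfrac{3}{2}\radlower_{\ssmall} + 3\radlower_{\bbig} - \tfrac{1}{2}\rad_{\bbig}$. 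Third, the algorithm computes \emph{two} candidate solutions, namely (i) matching on big clients plus $\delta$-tank on small clients only, costing at most $\cost(M) + \alpha\cdot\opt + \tfrac{3}{2}\radlower_{\ssmall}$, and (ii) $\delta$-tank on all clients, and returns the cheaper. Bounding the minimum by the average, the $-\tfrac{1}{2}\rad_{\bbig}$ term from (ii) cancels half of $\cost(M)$ via $\cost(M)\le\rad_{\bbig}$, leaving $\alpha\cdot\opt + \tfrac{3}{2}\radlower + \tfrac{1}{4}\cost(M) \le (\alpha+1.75)\cdot\opt$.
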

The running time of this algorithm is dominated by computing two $\alpha$-approximate TSP tours and a minimum cost matching.
For example, using the simple (combinatorial) Christofides-Serdyukov 1.5-approximation
we get a combinatorial 3.25-approximation for unsplittable CVRP whose running time is dominated by computing $O(1)$ perfect matchings in graphs with $O(|V|)$ nodes.

If we allow greater running time, we can improve the approximation guarantee further by using linear programming.
\begin{theorem}\label{thm:LPtheorem}
For any $\delta>0$, there is an approximation algorithm for unsplittable CVRP with ratio $\ln(2)+\alpha+\frac{1}{1-\delta}$ and running time $n^{O(\frac{1}{\delta})}$, where $\alpha$ is
the best approximation ratio for TSP. 
\end{theorem}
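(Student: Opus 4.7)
The plan is to set up a configuration LP that enumerates over tour-structures on the large-demand clients. Let $k = \lceil 1/\delta \rceil$, call a client $v$ \emph{big} if $d_v > \delta Q$ and \emph{small} otherwise, and write $B$ (resp.\ $S$) for the set of big (resp.\ small) clients. Since each tour has demand at most $Q$, a tour can contain at most $k$ big clients, so there are at most $n^{O(1/\delta)}$ subsets of $B$ that can appear as the big-client set of a single tour; this enumeration will be the source of the $n^{O(1/\delta)}$ running time.

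For each such subset $B'$, I would precompute an $\alpha$-approximate TSP tour $T_{B'}$ on $B' \cup \{r\}$. Then write a configuration LP whose variables are (i) a variable $y_{B'}$ representing the fractional use of $T_{B'}$ as the ``skeleton'' of a real tour, and (ii) auxiliary variables $x_{v,B'}$ assigning small client $v$ to the copy of $T_{B'}$. Coverage constraints would require every big client to lie in a fully chosen skeleton and every small client to be fully assigned, while capacity constraints would reserve $Q - \sum_{u \in B'} d_u$ room in each skeleton for the small clients placed into it. The optimum of this LP is at most $\OPT$ (up to the $\alpha$-factor already absorbed into the skeletons): the optimal integral CVRP solution itself exhibits the required configuration structure tour-by-tour.

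I expect three multiplicative losses in the rounding, each attached to a distinct subset of the cost. The TSP cost of the chosen skeletons is paid at factor $\alpha$ by construction of the $T_{B'}$. For the ``radial'' part of the LP cost, I would use randomized rounding on the skeletons followed by a filtering step for small clients: with a continuous random threshold $t$ used to decide which skeletons survive and which small clients are appended as short dedicated back-and-forth tours, the expected cost of the dedicated tours integrates to $\ln(2)$ times the radial LP cost, matching classical filter-based LP rounding for covering problems. Finally, the $1/(1-\delta)$ factor comes from the capacity reservation: because a skeleton with big-client set $B'$ has only $(1-\delta)Q$ of genuinely free capacity for small insertions, the standard splittable-to-unsplittable conversion for the small clients loses exactly this factor.

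The main obstacle will be the rounding step, and specifically isolating the sharp $\ln(2)$ constant. Combining the skeleton rounding with the small-client rounding while (a) maintaining feasibility of the $Q$-capacity constraint on every realized tour, (b) obtaining $\ln(2)$ rather than a weaker constant like $1$, and (c) cleanly separating the TSP cost from the radial cost so that the $\alpha$ factor only multiplies the former, is the technical heart of the proof. Getting the three losses to add rather than multiply — so that the final ratio is $\alpha + \ln(2) + 1/(1-\delta)$ rather than their product — will require routing the analysis through the usual two lower bounds on $\OPT$ (TSP cost and $(2/Q)\sum_v d_v c(r,v)$) and bounding the skeleton and insertion costs against these two lower bounds separately.
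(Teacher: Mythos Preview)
Your high-level plan---enumerate big-client configurations, write a configuration LP, round it, and clean up---has the right shape, but the way you attribute the three terms $\alpha$, $\ln 2$, and $1/(1-\delta)$ does not match a working analysis. The $\alpha$ does not belong on the skeletons: since $|B'|\le\lceil 1/\delta\rceil$, you can afford \emph{exact} tours on $B'\cup\{r\}$, so the LP over big-client tours alone has value at most $\opt$; if you instead bake $\alpha$-approximate skeletons into the LP and then sample at rate $\gamma$, you pay $\gamma\alpha\cdot\opt$ and the factors multiply rather than add. In the paper the $\alpha$ comes from a single TSP tour computed \emph{after} rounding, on all clients not yet covered. Your proposed source of $\ln 2$ also breaks: serving an uncovered small client by a dedicated back-and-forth trip costs $2c(r,v)$, whereas $v$'s radial contribution is only $2d_v c(r,v)$ with $d_v$ arbitrarily small, so no constant times the radial bound can control that term. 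And ``capacity reservation in skeletons'' is not where the $1/(1-\delta)$ comes from.

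The missing ingredient is the paper's $\delta$-tank lemma, a sharpening of Haimovich--Kan tour splitting. The LP has \emph{no} small-client variables at all. After sampling each big-client tour $T$ independently with probability $\min\{1,\gamma x_T\}$ (expected cost $\gamma\cdot\opt$; each big client uncovered with probability at most $e^{-\gamma}$), one computes an $\alpha$-approximate TSP tour on \emph{all} remaining clients and splits it by randomly tiling the demand axis with tiles of length $1-\delta$ instead of $1$. A small client at a tile boundary then needs only one depot round trip (the $\delta$ ``reserve'' absorbs it), contributing $\frac{1}{1-\delta}\cdot 2d_v c(r,v)$ in expectation; a surviving big client may need two, contributing $\frac{2}{1-\delta}\cdot 2d_v c(r,v)$. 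Choosing $\gamma=\ln 2$ makes $e^{-\gamma}=1/2$, so the expected big-client contribution halves back to $\frac{1}{1-\delta}\cdot 2d_v c(r,v)$, and the total splitting cost is $\frac{1}{1-\delta}\radlower\le\frac{1}{1-\delta}\opt$. The three pieces $\ln 2\cdot\opt$, $\alpha\cdot\opt$, and $\frac{1}{1-\delta}\cdot\opt$ then add directly; no assignment LP or filtering step is needed.
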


Finally, we show how combining these two results with the approach in \cite{VygenIPCO} actually yields further improvements: a combinatorial $(\alpha+1.75-\eps')$-approximation and an LP-based $(\alpha+\ln(2)+\frac{1}{1-\delta}-\eps')$-approximation in time $n^{O(\frac{1}{\delta})}$, where $\eps' > 0$ is an absolute constant.

%%%%%%%%%%%%%%%%%%%%%%%%%%%%%%%%%%%%%%%%%%%%%%%%%%%%%%%%%%%%%%%%%%%%%%%%%%%%%
\subsection{Related Work}
CVRP captures classic TSP when $Q$, the vehicle capacity, is at least the total demand of all clients.
For general metrics, Haimovich and Kan \cite{Haimovich-Kan} considered a simple heuristic, called tour partitioning, which starts from a TSP tour and then splits it into tours of size at most $Q$ by making back-and-forth trips to $r$ at certain points along the TSP tour. They showed this gives a $(1 + (1 - 1/Q)\alpha)$-approximation for splittable CVRP, where $\alpha$ is the approximation ratio for TSP. Essentially the same algorithm yields a 
$(2 + (1 - 2/Q)\alpha)$-approximation for unsplittable CVRP \cite{ALTINKEMER1987149}. 
These stood as the best-known bounds until recently, when Blauth et al. \cite{VygenIPCO} showed that given a TSP approximation $\alpha$, there is an $\eps > 0$ such that there is an $(\alpha + 2 \cdot (1 - \eps))$-approximation algorithm for CVRP. For $\alpha = 3/2$, they showed $\eps > 1/3000$. They also describe a $(\alpha + 1 - \eps)$-approximation algorithm for unit demand CVRP and splittable CVRP.

For the case of trees, Labbé et al. \cite{Labbe-Mercure} showed splittable CVRP is NP-hard, and Golden et al. \cite{Golden-Wong} showed unsplittable version is hard to approximate better than 1.5. This is via a simple reduction from bin packing. For splittable CVRP (again on trees), Hamaguchi et al. \cite{Hamaguchi-Katoh} defined a lower bound for the cost of the optimal solution and gave a 1.5 approximation with respect to the lower bound. Asano et al. \cite{stoc/AsanoKTT97} improved the approximation to $(\sqrt{41} - 1)/4$ with respect to the same lower bound and also showed the existence of instances whose optimal cost is exactly 4/3 times the lower bound. Later, Becker \cite{Becker18} gave a 4/3-approximation with respect to the  lower bound. Becker and Paul \cite{Becker-Paul-Bricriteria} showed a $(1, 1+ \epsilon)$-bicriteria polynomial-time approximation scheme for splittable CVRP in trees, i.e. a PTAS but every tour serves at most $(1+\eps)Q$ demand. Recently, Jayaprakash and Salavatirpour \cite{JS22} presented a QPTAS for unit-demand CVRP for trees and more generally graphs of bounded treewidth, bounded doubling metrics, or bounded highway dimension. Even more recently, building upon ideas of \cite{Becker-Paul-Bricriteria} and \cite{JS22}, Mathieu and Zhou \cite{MZ21} have presented a PTAS for splittable CVRP on trees.

Das and Mathieu \cite{Das-Mathieu} gave a quasi-polynomial-time approximation scheme (QPTAS) for CVRP in the Euclidean plane ($\mathbb{R}^2$). A PTAS for when $Q$ is $O(\log n/\log \log n)$ or $Q$ is $\Omega(n)$ was shown by Asano et al. \cite{stoc/AsanoKTT97}. A PTAS for Euclidean plane $\mathbb{R}^2$ for moderately large values of $Q$, {\em i.e.} $Q \le 2^{\log^\delta n}$ where $\delta = \delta(\eps)$, was shown by Adamaszek et al \cite{AdamaszekCL09}, building on the work of Das and Mathieu \cite{Das-Mathieu}. For high dimensional Euclidean spaces $\mathbb{R}^d$, Khachay et al. \cite{Khachay-PTAS} showed a PTAS when $Q$ is $O(\log^{1/d}n)$. For graphs of bounded doubling dimension, Khachay et al. \cite{Khachay-moderatenumer} gave a QPTAS when the optimal number of tours is $\polylog(n)$ and Khachay et al.  \cite{Khachay-moderatecapacity} gave a QPTAS when $Q$ is $\polylog(n)$.

The next results we summarize are all for the case $Q = O(1)$. CVRP remains APX-hard in general metrics in this case but is polynomial-time solvable on trees. There exists a PTAS for CVRP in the Euclidean plane ($\mathbb{R}^2$) (again for when $Q$ is fixed) as shown by Khachay et al. \cite{Khachay-PTAS}. A PTAS for planar graphs was given by Becker et al. \cite{PlanarPTAS-Klein} and a QPTAS for planar and bounded-genus graphs was then given by Becker et al. \cite{QPTAS-Planar-boundedgenus}.  A PTAS for graphs of bounded highway dimension and an exact algorithm for graphs with treewidth $tw$ with running time $O(n^{tw\cdot Q})$ was shown by Becker et al. \cite{Becker-boundedhighway}.  Cohen-Addad et al. \cite{Klein-minorfree} showed an efficient PTAS for graphs of bounded-treewidth, an efficient PTAS for bounded highway dimension, an efficient PTAS for bounded genus metrics and a QPTAS for minor-free metrics. %Again, note that these results are all under the assumption that $Q$ is fixed.

{\bf Organization of the paper:} We start with definitions and preliminaries in Section \ref{sec: critical lemma}.
The proof of Theorem \ref{thm:theorem325} is presented in Section \ref{sec: simple alg} and the proof of Theorem \ref{thm:LPtheorem} is presented in Section \ref{sec: rand alg}. Comments on incorporating our ideas with those in \cite{VygenIPCO}
appear at the end of Section \ref{sec: rand alg}.
% In Section \ref{sec: simple alg} we prove Theorem \ref{thm:theorem325} and in Section \ref{sec: rand alg} we prove Theorem \ref{thm:LPtheorem}.

\section{Preliminaries}\label{sec: critical lemma}
%We start by some definitions and lemmas that are used in both algorithms.
For ease of exposition, we assume we have scaled all the demands and the capacity of the vehicle so that the capacity is $1$ and each $d(v)\in (0,1]$ (so demands can be rational numbers). Also, we treat $r$ as a separate node from the rest of the nodes. Formally:
\begin{definition}[\cvrp]\label{CVRP def}
An instance $(V,r,c,d)$ of \cvrp (CVRP) consists of:
\begin{itemize}
    \item a set of clients $V$, where $|V|=n$,
    \item a depot $r$, not in $V$,
    \item metric travel costs/distances $c:(V\cup\{r\})\times( V\cup\{r\})\to\mathbb{R}_{\geq 0}$,
    \item a demand $d_v\in (0,1]$ for each customer $v\in V$. 
\end{itemize}
A feasible solution is a collection of tours $\mathcal{T}$ such that
\begin{itemize}
    \item every tour $T\in\mathcal{T}$ is a cycle containing $r$,
    \item every client belongs to exactly one tour,
    \item $\sum\limits_{v\in T}d_v\leq 1$ for all $T\in\mathcal{T}$.
\end{itemize}
The goal is to find a feasible solution with minimum cost where the cost is the sum of costs of the edges in the solution and denoted by $c(\mathcal{T}):=\sum\limits_{T\in\mathcal{T}}c(T):=\sum\limits_{T\in\mathcal{T}}\sum\limits_{(u,v)\in T}c(u,v)$
\end{definition}
Observe we are viewing a tour $T$ as both a set of edges comprising a cycle plus the set of endpoints of these edges, so we may use notation like $v \in T$ for a location $v$ and also $(u,v) \in T$ for a pair of locations $(u,v)$ appearing consecutively along the tour $T$. It is convenient to view the depot $r$ as having $d_r = 0$, for example when we sum the demand of all locations on a tour.

Fix an unsplittable CVRP instance ${\cal I}=(V,r,c,d)$ for the rest of this paper.
We use $\OPT$ to denote an optimal solution for $\cal I$ and $\opt$ the value of this
optimal solution. %We use the following definitions and notation throughout the paper.
\begin{definition}[Feasible tours]
A tour $T$ that spans $r$ and some clients is called feasible for $\cal I$ if the total demand of the clients in $T$ is at most $1$, i.e., $\sum\limits_{v\in T}d_v\leq 1$.
\end{definition}

Clients are partitioned into {\em small} and {\em big} clients based on a parameter $\delta\in [0,\frac{1}{2}]$, which will be chosen differently for our two algorithms. % which will be fixed in each subsequent section.

\begin{definition}[Small and big clients]\label{small and big def}
For a fixed $\delta\in[0,\frac{1}{2}]$, we say a client $v$ is small if $d_v\in [0,\delta]$, and big otherwise.
\end{definition}

%\begin{definition}[Some useful sums]\label{two sums}
Let $\radlower:=\sum\limits_{v\in V}2\cdot d_v\cdot c(r,v)$. 
This is historically referred to as the radial lower bound and the following
simple well-known lemma has been used often in previous work.

%{\bf ZF: Include citation for lemma?}
\begin{lemma}[Haimovich and Kan \cite{Haimovich-Kan}]\label{radial lowerbound}
$\radlower\leq \opt$.
\end{lemma}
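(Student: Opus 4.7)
The plan is to decompose the radial lower bound tour-by-tour along an optimal solution and bound each piece by the tour's own cost. Let $\OPT = \{T_1, T_2, \ldots, T_k\}$ be a fixed optimal collection of feasible tours; since every client belongs to exactly one tour, we can rewrite
\[
\radlower \;=\; \sum_{T \in \OPT} \sum_{v \in T} 2\microspace d_v\microspace c(r,v).
\]
So it suffices to show, for each feasible tour $T$, that $\sum_{v \in T} 2 d_v c(r,v) \leq c(T)$.

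Fix such a tour $T$. The two ingredients I would combine are: (i) feasibility, which gives $\sum_{v \in T} d_v \leq 1$, and (ii) a triangle-inequality bound on $c(r,v)$ for each client $v$ on the tour. For (ii), since $T$ is a cycle through $r$ and $v$, splitting $T$ into the two $r$\nobreakdash-to-$v$ paths along the cycle, metricity of $c$ implies that each path has cost at least $c(r,v)$, hence $c(T) \geq 2\microspace c(r,v)$, i.e.\ $c(r,v) \leq c(T)/2$. Plugging these in,
\[
\sum_{v \in T} 2\microspace d_v\microspace c(r,v) \;\leq\; 2 \cdot \frac{c(T)}{2} \cdot \sum_{v \in T} d_v \;\leq\; c(T).
\]

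Summing this inequality over all $T \in \OPT$ yields $\radlower \leq \sum_{T \in \OPT} c(T) = \opt$, as desired. There is no real obstacle here; the only subtlety worth stating explicitly is that the bound $c(r,v) \leq c(T)/2$ holds uniformly for all $v \in T$ simultaneously, which is what lets us factor $c(T)/2$ out of the sum before invoking $\sum_{v \in T} d_v \leq 1$.
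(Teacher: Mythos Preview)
Your proof is correct and follows essentially the same approach as the paper: decompose $\radlower$ tour-by-tour over an optimal solution, use the triangle inequality to get $2\,c(r,v)\le c(T)$ for each $v\in T$, and then invoke feasibility $\sum_{v\in T} d_v\le 1$ to bound each tour's contribution by $c(T)$. The only difference is cosmetic—you spell out the triangle-inequality step (splitting the cycle into two $r$--$v$ paths) more explicitly than the paper does.
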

\begin{proof}
Let $\mathcal{T}^*$ be a collection of optimal tours. For every vertex in $T\in\mathcal{T}^*$, we have $2\cdot d_v\cdot c(r,v)\leq d_v\cdot c(T)$ by the triangle inequality. Hence, for a tour $T\in\mathcal{T}^*$, we have $\sum\limits_{v\in T}2\cdot d_v\cdot c(r,v)\leq c(T)\cdot\sum\limits_{v\in T}d_v\leq c(T)$. The lemma follows by summing the previous inequality for each tour in $\mathcal{T}^*$.
\end{proof}

We also define a similar sum for small and big clients separately, i.e.,
$\radlower_{\ssmall}:=\sum\limits_{\substack{v\in V:\\ v~is~\ssmall}}2\cdot d_v\cdot c(r,v)$, and
$\radlower_{\bbig}:=\sum\limits_{\substack{v\in V:\\ v~is~\bbig}}2\cdot d_v\cdot c(r,v)$.
Also define
$\rad_{\bbig}:=\sum\limits_{\substack{v\in V:\\ v~is~\bbig}}2\cdot c(r,v)$, which is the cost of serving
all big clients using a separate tour for each client.
%Similar to $\radlower$, we define $\rad_{\ssmall}$ and $\rad_{\bbig}$.
%\end{definition}

Given a TSP tour, the algorithm by Haimovich and Kan has the vehicle begin by randomily filling the ``tank'' of demand it carries with some value $\theta \sim (0,1]$. It then travels about the TSP tour: if the tank has insufficient demand to serve a client it travels to the depot to get enough demand to serve the client, returns to serve the client, and then returns to the depot to refill the tank appropriately before resuming the tour. The probability that such a resupply trip is performed when trying to serve a client $v$ is $d_v$, so the total cost of performing these round trips is at most $2 \cdot \radlower \leq 2 \cdot \opt$ in expectation.

One of the main driving forces behind our improvements is the following idea. For a small client, if we think of the vehicle's tank as only holding $1-\delta$ demand and keep a reserved tank holding demand $\delta$, then if we cannot serve a client with the demand in the main tank, we can serve it using the reserve tank and only make one round trip to the depot to refill both tanks before proceeding. Both of our main algorithms balance this idea with approaches to handling big clients.

We formalize this notion of using a reserve tank in Lemma \ref{tour partitioning} below. When $\delta=0$ this gives the same result as in \cite{Haimovich-Kan,ALTINKEMER1987149}.

\begin{lemma}[$\delta$-tank lemma]\label{tour partitioning}
Let $\tsp$ be a TSP tour on $V\cup\{r\}$ and define small and big clients based on a fixed $\delta \leq 1/2$. There is an algorithm that turns $\tsp$ into a feasible solution for the CVRP instance with cost 
\begin{equation}\label{tour part. cost}
    c(\tsp)+\frac{1}{1-\delta}\cdot\radlower_{\ssmall}+\frac{2}{1-\delta}\cdot\radlower_{\bbig}-\frac{\delta}{1-\delta}\cdot\rad_{\bbig},
\end{equation}
and running time $O(n^2)$.
\end{lemma}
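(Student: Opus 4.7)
The plan is to randomize a Haimovich--Kan-style tour partitioning in which the vehicle conceptually carries a \emph{main} tank of capacity $1-\delta$ and a \emph{reserve} tank of capacity $\delta$. I sample $\theta\in(0,1-\delta]$ uniformly at random and walk along $\tsp$; at each client $v_i$ the algorithm's behaviour is governed by a \emph{virtual} main-tank level $M_i$---a deterministic function of $\theta$ and the cumulative demand $\sum_{j<i}d_{v_j}$---that is marginally uniform on $(0,1-\delta]$, exactly as in the usual HK analysis. If $M_i\ge d_{v_i}$, serve $v_i$ normally. Otherwise a wrap is triggered and I distinguish two cases. An \emph{easy wrap} ($M_i+\delta\ge d_{v_i}$, necessarily the case whenever $v_i$ is small) is handled by using the reserve to cover the overflow and inserting a single detour to $r$ to refill both tanks, costing at most $2c(r,v_i)$ by the triangle inequality. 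A \emph{hard wrap} ($M_i+\delta<d_{v_i}$, only possible for big $v_i$) requires a detour to $r$ \emph{before} $v_i$ to refill the main tank and then, only when $d_{v_i}>1-\delta$ so that serving $v_i$ also depletes the reserve, a second detour to $r$ \emph{after} $v_i$ to refill the reserve; each such inserted leg again costs at most $2c(r,v_i)$.

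With $M_i$ marginally uniform, direct computation yields the probability of each case at $v_i$. A small client contributes expected extra cost $\frac{2d_v}{1-\delta}c(r,v)$, which sums to $\frac{\radlower_{\ssmall}}{1-\delta}$. For a big client with $d_v\le 1-\delta$, easy and hard wraps occur with probabilities $\frac{\delta}{1-\delta}$ and $\frac{d_v-\delta}{1-\delta}$ and each costs $2c(r,v)$, giving $\frac{2d_v}{1-\delta}c(r,v)$; for $d_v>1-\delta$ the hard wrap costs $4c(r,v)$ instead, and the analogous sum becomes $\frac{2+2d_v-4\delta}{1-\delta}c(r,v)$. Each of these per-big-client expectations is at most $\frac{2(2d_v-\delta)}{1-\delta}c(r,v)$: the two required inequalities reduce to $d_v\ge\delta$ and $d_v\ge 1-\delta$ respectively, both of which hold by definition of the sub-cases. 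Summing over big clients contributes at most $\frac{2\radlower_{\bbig}-\delta\,\rad_{\bbig}}{1-\delta}$, and adding the unaltered base cost $c(\tsp)$ yields the stated bound in expectation.

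For the $O(n^2)$ running time I would derandomize by noting that as $\theta$ varies across $(0,1-\delta]$ the pattern of wraps changes only at the $O(n)$ values where some $(\theta+S_i)\bmod(1-\delta)$ crosses $0$; enumerating one representative $\theta$ per interval, simulating the (now deterministic) algorithm, and returning the cheapest feasible output takes $O(n^2)$ time and achieves cost at most the expectation computed above. The main subtlety I expect to spell out carefully is justifying that the \emph{virtual} main-tank level $M_i$ is genuinely marginally uniform despite the physical tank being refilled at each inserted detour: the clean way is to define $M_i$ purely in terms of $\theta$ and $\sum_{j<i}d_{v_j}$ (rather than as a state produced by physical execution), so that the decisions at different clients couple only through the shared $\theta$, and the wrap/sub-case probabilities used above decouple client-by-client under linearity of expectation.
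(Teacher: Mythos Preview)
Your overall framework matches the paper's: sample $\theta$ uniformly, tile cumulative demand in blocks of length $1-\delta$, and charge a small client one round trip and a big client up to two. The cost calculus you write down is fine and lands on the same per-big-client bound $\tfrac{2(2d_v-\delta)}{1-\delta}c(r,v)$ as the paper. The problem is feasibility in your hard-wrap sub-case $d_{v_i}\le 1-\delta$, where you insert only a single detour (before $v_i$) and charge $2c(r,v_i)$. Once you do that, the \emph{virtual} level $M_{i+1}$ (which you correctly insist is a function only of $\theta$ and $S_i$) strictly exceeds the \emph{physical} main-tank level, and subsequent decisions taken against $M_{i+1}$ can overcommit the real vehicle.

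Concretely, take $\delta=0.3$, $\theta=0.05$, and three big clients with demands $0.5,\,0.6,\,0.5$. Then $M_1=0.05$ (hard wrap, $d_{v_1}\le 0.7$: one detour), $M_2=0.25$ (hard wrap, $d_{v_2}=0.6\le 0.7$: one detour), $M_3=0.35$ (easy wrap: serve $v_3$, then detour). Your tours are $\{v_1\}$ and $\{v_2,v_3\}$, the second with total demand $1.1>1$. The physical picture confirms it: after the before-detour at $v_2$ the main tank is $0.7$, serving $v_2$ leaves $0.1$, and main$+$reserve $=0.4<0.5=d_{v_3}$.

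The paper avoids this by \emph{always} paying $4c(r,v_i)$ in the hard-wrap case: it serves $v_i$ by itself (two round trips), so the next tour starts fresh at $v_{i+1}$ and feasibility follows directly from the tiling (the segment between consecutive cuts has length at most $1-\delta$, plus at most $\delta$ of overhang from the last client when only two copies are added). Since you ultimately upper-bound your hard-wrap contribution by $\tfrac{2(2d_v-\delta)}{1-\delta}c(r,v)$ anyway, the cheaper one-detour variant buys you nothing in the stated bound; dropping it (i.e.\ always doing two detours in a hard wrap) fixes feasibility and leaves your cost analysis and $O(n^2)$ derandomization intact.
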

\begin{proof}
Number the clients $V=\{v_1,...,v_n\}$ in the order they appear in $\tsp$. Choose $\theta\in [0,1-\delta]$ uniformly at random. We tile the nonnegative real number line from $0$ to $\sum\limits_{v\in V}d_v$ where the first tile has length $\theta$ and all subsequent intervals have length $1-\delta$, i.e., the endpoints of the tiles are $\theta+\eta\cdot(1-\delta)$ for integers $\eta\geq 0$. For any client $v_i$ if there is an integer $\ell\geq 0$ where
\begin{equation}\label{delta tank: eq}
    \sum\limits_{j=1}^{i-1}d_{v_j}<\theta+\ell\cdot(1-\delta)\leq \sum\limits_{j=1}^{i}d_{v_j}
\end{equation}
then we call $v_i$ a \textbf{bad} client, see Figure \ref{figure}. If $i = 1$, we take the LHS sum in \eqref{delta tank: eq} to be 0. For each bad client $v_i$, do the following:
\begin{itemize}
    \item[i.] if $v_i$ is a small client, then add two copies of $(r,v_i)$; so the added cost here is $2\cdot c(r,v_i)$. This means make a single round trip from $v_i$ to the depot.
    \item[ii.] if $v_i$ is a big client do the following:
    \begin{itemize}
        \item[a.] if for some $\ell$, we have $\sum\limits_{j=1}^{i-1}d_{v_j}<\theta+\ell\cdot(1-\delta)<\sum\limits_{j=1}^{i}d_{v_j} - \delta $ then add four copies of $(r,v_i)$; this means make two round trips to the depot. So the added cost here is $4\cdot c(r,v_i)$,
        \item[b.] otherwise add two copies of $(r,v_i)$ (i.e. do one round trip to $r$); so the added cost is $2\cdot c(r,v_i)$.
    \end{itemize}
\end{itemize}
We construct a collection of feasible tours using $\tsp$ and added edges; we will use shortcutting in some steps but the triangle inequality means the cost of our resulting solution is at most the cost of $\tsp$ plus the cost of the edges added for bad clients.
We start from $r$ and follow the TSP tour $\tsp$.
When the next vertex to be visited, say $v_i$, is bad, we consider two cases. If we only added two copies of the edge $(r, v_i)$, then we complete the current tour by visiting $v_i$ and then going to $r$. If we added four copies of the edge $(r, v_i)$, we complete the tour instead by going from $v_{i-1}$ to $r$ and then serve $v_i$ with a tour that only visits $v_i$ and returns to $r$. In either case, we resume constructing the next tour by going from $r$ to $v_{i+1}$ and continuing along the TSP tour.
See Figure \ref{figure} for an illustration. It is simple to verify the total demand clients on each tour we output is at most $1$.

% When we reach to a bad vertex $v_i$ with two copies of $(r,v_i)$, we go from $v_i$ to $r$ and that would end the current tour. Then, from $r$ we go to $v_{i+1}$ to start the next tour and follow $\tsp$ again until we reach to another bad vertex (or the end of the $\tsp$). If we reach to a bad vertex $v_i$ with four copies of $(r,v_i)$, then we go from $v_{i-1}$ to $r$ to complete the current tour. Next, we visit $v_i$ and back to $r$ as a single tour and then we continue from $v_{i+1}$,  It is easy to see based on our construction, each tour is a feasible tour. {\bf RM: should we explain more why each tour is feasible? I think it's easy enough that reader can verify it but messy enough to mess up the write up.}

Next, we bound the expected cost of final solution. To do that we need the following two claims:

\begin{claim}\label{prob bad client}
$\Pr[v_i~is~a~bad~client]=\min\{1,\frac{d_{v_i}}{1-\delta}\}$.
\end{claim}
\begin{proof}
A client $v_i$ is a bad client exactly when at least one endpoint of a tile lands in $(\sum\limits_{j=1}^{i-1}d_{v_j},~\sum\limits_{j=1}^i d_{v_j}]$ which is exactly the claimed probability.
\end{proof}

\begin{claim}\label{prob two round trips}
For a big client $v_i$, we have $\Pr[4~copies~of~(r,v_i)~added]=\frac{d_{v_i}-\delta}{1-\delta}$.
\end{claim}
\begin{proof}
Similar to the previous proof, the event we add four copies of $(r,v_i)$ is exactly when an endpoint of a tile lies in $(\sum\limits_{j=1}^{i-1}d_{v_j},\sum\limits_{j=1}^i d_{v_j}-\delta]$. 
\end{proof}
The expected cost of the final solution is
\begin{equation}\label{delta tank: cost of partitioning}
\begin{aligned}
    c(\tsp)&+\sum\limits_{v~\ssmall}\Pr[v~is~a~bad~client]\cdot 2\cdot c(r,v)\\
    & + \sum\limits_{v~\bbig} 
    \big(\Pr[v~is~a~bad~client]-\Pr[4~copies~of~(r,v)~added]\big)\cdot 2\cdot c(r,v)\\
    & + \sum\limits_{v~\bbig} \Pr[4~copies~of~(r,v)~added]\cdot 4\cdot c(r,v)\\
    &\leq c(\tsp) + \frac{1}{1-\delta}\cdot\sum\limits_{v~\ssmall}2\cdot d_v\cdot c(r,v) + \sum\limits_{v~\bbig} \Pr[v~is~a~bad~client]\cdot 2\cdot c(r,v)\\
    & + \sum\limits_{v~\bbig} \Pr[4~copies~of~(r,v)~added]\cdot 2\cdot c(r,v)\\
    &\leq c(\tsp)+\frac{1}{1-\delta}\cdot\sum\limits_{v~\ssmall}2\cdot d_v\cdot c(r,v) + \frac{1}{1-\delta}\cdot\sum\limits_{v~\bbig} 2\cdot d_v\cdot c(r,v)\\
    & + \sum\limits_{v~\bbig} \frac{d_v-\delta}{1-\delta}\cdot 2\cdot c(r,v)\\
    & = c(\tsp) + \frac{1}{1-\delta}\cdot\radlower_{\ssmall} + \frac{2}{1-\delta}\cdot \radlower_{\bbig} - \frac{\delta}{1-\delta}\cdot \rad_{\bbig},
\end{aligned}
\end{equation}
where in the first and the second inequalities, we just substitute the probability terms with their values given by Claims \ref{prob bad client} \& \ref{prob two round trips} and the fact that $\min\{1,\frac{d_v}{1-\delta}\}\leq \frac{d_v}{1-\delta}$.

Note also that we can deterministically compute the optimal partitioning of the sequence $v_1, \ldots, v_n$ into consecutive subsequences corresponding to feasible tours using dynamic programming in $O(n^2)$ time. This will produce a solution with cost at most \eqref{delta tank: cost of partitioning} since that is a bound on the expected cost of a random partitioning of this sort.
% to compute the optimum partitioning of the sequence of nodes $v_1, \ldots, v_n$ into feasible tours, each of which is just a consecutive subsequence of this sequence.
% there are $O(n)$ different values of $\theta$ that would lead to different values of ., we might have different partitioning, so we can enumerate all of these $n$ different values and choose the one with the minimum cost and this cost is guaranteed to be at most the cost in \eqref{delta tank: cost of partitioning}.
\end{proof}

\begin{figure}
\centering
\begin{minipage}{.4\textwidth}
\centering
\scalebox{.6}{
\begin{tikzpicture}[scale=0.7]
        \draw (0,0) -- (12,0); %Axis
        \draw (0,.2) -- (0,-.2) {};
        \draw (1,.2) -- (1,-.2) {};
        \draw (3,.2) -- (3,-.2) {};
        \draw (4,.2) -- (4,-.2) {};
        \draw (5.5,.2) -- (5.5,-.2) {};
        \draw (6.5,.2) -- (6.5,-.2) {};
        \draw (9,.2) -- (9,-.2) {};
        \draw (9.8,.2) -- (9.8,-.2) {};
        \draw (10.6,.2) -- (10.6,-.2) {};
        \node[label=above:$0$] (f0) at (0,.2) {};
        \node[label=above:$v_1$] (f1) at (1,.2) {};
        \node[label=above:$v_2$] (f2) at (3,.2) {};
        \node[label=above:$v_3$] (f3) at (4,.2) {};
        \node[label=above:$v_4$] (f4) at (5.5,.2) {};
        \node[label=above:$v_5$] (f5) at (6.5,.2) {};
        \node[label=above:$v_6$] (f6) at (8.9,.2) {};
        \node[label=above:$v_7$] (f7) at (9.8,.2) {};
        \node[label=above:$v_8$] (f8) at (10.6,.2) {};
        
        \draw [decorate,decoration={brace,amplitude=5pt,mirror,raise=4ex}]
        (1,.3) -- (2.4,.3) node[midway,yshift=-3em]{$d_{v_2}-\delta$};
        \draw [decorate,decoration={brace,amplitude=5pt,mirror,raise=4ex}]
        (4,.3) -- (4.6,.3) node[midway,yshift=-3em]{$d_{v_4}-\delta$};
        \draw [decorate,decoration={brace,amplitude=5pt,mirror,raise=4ex}]
        (6.5,.3) -- (8.3,.3) node[midway,yshift=-3em]{$d_{v_6}-\delta$};
        
        \draw (0,-2) -- (12,-2); %Axis
        \draw[dashed] (0,0) -- (0,-2.2) {};
        \draw[dashed] (.8,0) -- (.8,-2.2) {};
        \draw[dashed] (2.8,0) -- (2.8,-2.2) {};
        \draw[dashed] (4.8,0) -- (4.8,-2.2) {};
        \draw[dashed] (6.8,0) -- (6.8,-2.2) {};
        \draw[dashed] (8.8,0) -- (8.8,-2.2) {};
        \draw[dashed] (10.8,0) -- (10.8,-2.2) {};
        
        \draw [decorate,decoration={brace,amplitude=5pt,mirror,raise=4ex}]
        (0,-1.5) -- (.8,-1.5) node[midway,yshift=-3em]{$\theta$};
        \draw [decorate,decoration={brace,amplitude=5pt,mirror,raise=4ex}]
        (0.8,-1.5) -- (2.8,-1.5) node[midway,yshift=-3em]{$1-\delta$};
        \draw [decorate,decoration={brace,amplitude=5pt,mirror,raise=4ex}]
        (2.8,-1.5) -- (4.8,-1.5) node[midway,yshift=-3em]{$1-\delta$};
        \draw [decorate,decoration={brace,amplitude=5pt,mirror,raise=4ex}]
        (4.8,-1.5) -- (6.8,-1.5) node[midway,yshift=-3em]{$1-\delta$};
        \draw [decorate,decoration={brace,amplitude=5pt,mirror,raise=4ex}]
        (6.8,-1.5) -- (8.8,-1.5) node[midway,yshift=-3em]{$1-\delta$};
        \draw [decorate,decoration={brace,amplitude=5pt,mirror,raise=4ex}]
        (8.8,-1.5) -- (10.8,-1.5) node[midway,yshift=-3em]{$1-\delta$};
\end{tikzpicture}
}
\subcaption{}
\end{minipage}
\begin{minipage}{.25\textwidth}
\centering
\scalebox{.6}{
\begin{tikzpicture}[scale=0.8,
smallnode/.style={circle, draw=black, fill=black, very thick, scale =0.2}]
\node[smallnode, label=below:$r$] (r) at (0,-3) {};
\node[smallnode, label=right:$v_1$] (v1) at (1.1,-2.7) {};
\node[smallnode, label=right:$v_2$] (v2) at (1.5,-2) {};
\node[smallnode, label=right:$v_3$] (v3) at (1.5,-1) {};
\node[smallnode, label=right:$v_4$] (v4) at (1,0) {};
\node[smallnode, label=left:$v_5$] (v5) at (-1,0) {};
\node[smallnode, label=left:$v_6$] (v6) at (-1.5,-1) {};
\node[smallnode, label=left:$v_7$] (v7) at (-1.3,-2) {};
\node[smallnode, label=left:$v_8$] (v8) at (-1.2,-2.6) {};

\draw[very thick] (r) -- (v1);
\draw[very thick] (v1) -- (v2);
\draw[very thick] (v2) -- (v3);
\draw[very thick] (v3) -- (v4);
\draw[very thick] (v4) -- (v5);
\draw[very thick] (v5) -- (v6);
\draw[very thick] (v6) -- (v7);
\draw[very thick] (v7) -- (v8);
\draw[very thick] (v8) -- (r);

\draw[dashed, very thick] (r) edge [bend right=40] node[midway, below]{$\times 2$} (v1);

\draw[dashed, very thick] (r) edge node[midway, above]{$\times 2$} (v2);

\draw[dashed, very thick] (r) edge node[midway, right]{$\times 2$} (v4);

\draw[dashed, very thick] (r) edge [bend right=30] node[pos=.7, right]{$\times 4$} (v6);

\end{tikzpicture}
}
\subcaption{}
\end{minipage}
\begin{minipage}{.25\textwidth}
\centering
\scalebox{.6}{
\begin{tikzpicture}[scale=.8,
smallnode/.style={circle, draw=black, fill=black, very thick, scale =0.2}]
\node[smallnode, label=below:$r$] (r) at (0,-3) {};
\node[smallnode, label=right:$v_1$] (v1) at (1.1,-2.7) {};
\node[smallnode, label=right:$v_2$] (v2) at (1.5,-2) {};
\node[smallnode, label=right:$v_3$] (v3) at (1.5,-1) {};
\node[smallnode, label=right:$v_4$] (v4) at (1,0) {};
\node[smallnode, label=left:$v_5$] (v5) at (-1,0) {};
\node[smallnode, label=left:$v_6$] (v6) at (-1.5,-1) {};
\node[smallnode, label=left:$v_7$] (v7) at (-1.3,-2) {};
\node[smallnode, label=left:$v_8$] (v8) at (-1.2,-2.6) {};

\draw[very thick] (r) -- (v1);
\draw[very thick] (r) edge [bend right=40] (v1);
\draw[very thick] (r) edge [bend left=20] (v2);
\draw[very thick] (r) -- (v3);
\draw[very thick] (v3) -- (v4);
%\draw[very thick] (v4) -- (v5);
\draw[very thick] (r) -- (v5);
\draw[very thick] (r) edge [bend right=30] (v5);
\draw[very thick] (r) -- (v6);
\draw[very thick] (v7) -- (v8);
\draw[very thick] (v8) -- (r);

\draw[very thick] (r) edge (v2);

\draw[very thick] (r) edge (v4);

\draw[very thick] (r) edge [bend right=20] (v6);

\draw[very thick] (r) edge [bend right=15] (v7);
\end{tikzpicture}
}
\subcaption{}
\end{minipage}
\caption{Illustrations of steps in the proof of Lemma \ref{tour partitioning}. Here, $v_2,v_4$, and $v_6$ are big clients and the rest are small clients. The random tiling is depicted in (a). According to this tiling, $v_1,v_2,v_4$, and $v_6$ are bad clients. In (b), the solid edges form a TSP tour $\tsp$ and the dashed edges are the added edges and the number on the dashed edges show how many copies of that edge is being added.
%For example, since the endpoint of the first tile lies in $[0,v_1]$, and $v_1$ is a small client, we have two copies of $(r,v_1)$. Note $v_2$ is a big client but since the endpoint of the tile lies outside of $[v_1,v_2-\delta]$, we only add two copies of $(r,v_2)$, same holds for $v_4$. Lastly, $v_6$ is a big client and there is an endpoint of a tile that lies within $[v_5,v_6-\delta]$ and therefore we add four copies of $(r,v_6)$.
In (c), we show how to transform $\tsp$ and the added edges into a collection of feasible tours for the CVRP instance with no more cost than the multiset of edges depicted in (b).
}
\label{figure}
\end{figure}
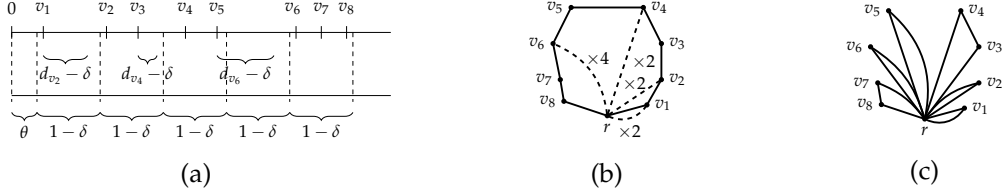

\section{A Combinatorial $3.25$-Approximation}\label{sec: simple alg}
In this section, we set $\delta:=\frac{1}{3}$. So $v$ is a small client if $d_v\leq \frac{1}{3}$ and big if $d_v > \frac{1}{3}$. Note that in any feasible solution, there are at most two big clients in any single tour. Our algorithm tries two things: the first serves only big clients by pairing them up optimally to form these tours and then runs the classic $3.5$-approximation on the small clients but using our $\delta$-tank procedure (see Lemma \ref{tour partitioning}) for performing the tour splitting. The other simply runs the $3.5$-approximation using $\delta$-tank tour splitting on all clients.

Let us first explain how we use matching. Consider an auxiliary graph $\aux=(V_{\bbig}, E_{aux})$ where $V_{\bbig}\subseteq V$ is the set of all big clients and $E_{aux}$ constructed as follows: for any pair of big clients $u,v$ where $d_u+d_v\leq 1$ we add and edge between $u$ and $v$ with cost $c(r,u) + c(u,v) + c(v,r)$. Furthermore, for every big client $v$ there is a loop in $\aux$ with cost equal to $2\cdot c(r,v)$. We compute a min-cost perfect matching\footnote{A set of edges $M$ that may contain loops is a perfect matching if each node lies in precisely one edge: so a node is either matched with another node via a normal edge or with itself via a loop.} which corresponds to the cheapest way to select tours to serve only the big clients. The precise details are presented in Algorithm \ref{alg: 3.25}.

\begin{algorithm}[H]
  \caption{$(\alpha+1.75)$-approximation}\label{alg: 3.25}
\begin{algorithmic}[1]
\STATE The first solution is constructed as follows:
\begin{ALC@g}
   \STATE Compute a min-cost perfect matching $M$ on $\aux$. Let $\mathcal{T}'$ be the tours corresponding to the edges in $M$.
   \STATE Compute a TSP tour $\tsp$ on small clients and $r$.
   \STATE Apply Lemma \ref{tour partitioning} to $\tsp$ with $\delta=\frac{1}{3}$ and let $\mathcal{T}''$ be the resulting solution.
   \STATE $\mathcal{T}\gets \mathcal{T}'\cup \mathcal{T}''$
\end{ALC@g}
\STATE The second solution is constructed as follows:
\begin{ALC@g}
    \STATE Compute a TSP tour $\tsp$ on $V\cup\{r\}$.
    \STATE Apply Lemma \ref{tour partitioning} to $\tsp$ with $\delta=\frac{1}{3}$ and let $\mathcal{F}$ be the resulting solution.
\end{ALC@g}
\STATE Return the cheaper of the two solutions $\mathcal T$ and $\mathcal F$.%$\argmin\{c(\mathcal{T}),c(\mathcal{F})\}$.
\end{algorithmic}
\end{algorithm}

\subsection{Analysis}
We begin with two simple observations.
\begin{lemma}\label{3.25: matching cost by opt}
$\cost(M)\leq \opt$.
\end{lemma}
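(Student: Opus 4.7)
The plan is to exhibit a feasible perfect matching in $\aux$ whose cost is bounded by $\opt$; since $M$ is minimum-cost, the bound on $\cost(M)$ follows immediately.

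First I would observe that because $\delta = 1/3$ and each big client has $d_v > 1/3$, any single tour in the optimal solution $\OPT$ contains at most two big clients (three big clients would exceed the capacity of $1$). So the tours of $\OPT$ that contain at least one big client fall into two types: (i) tours containing exactly one big client $v$, and (ii) tours containing exactly two big clients $u,v$ (where necessarily $d_u + d_v \le 1$, so $\{u,v\} \in E_{aux}$).

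Next I would construct a candidate matching $M'$ in $\aux$ as follows. For each $\OPT$ tour $T$ of type (i) with big client $v$, put the loop at $v$ into $M'$. For each $\OPT$ tour $T$ of type (ii) with big clients $u,v$, put the edge $(u,v)$ into $M'$. Because every big client belongs to exactly one tour of $\OPT$, and each such tour contributes exactly one edge (loop or normal) covering each of its big clients, $M'$ is a perfect matching on $V_{\bbig}$.

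The cost bound then follows from the triangle inequality (repeated shortcutting). For a type (i) tour $T$ through $r$ and $v$ (and possibly other clients), shortcutting yields $2 \cdot c(r,v) \le c(T)$, so the loop at $v$ costs at most $c(T)$. For a type (ii) tour $T$ visiting $r, u, v$ in some cyclic order, shortcutting to the triangle on $\{r,u,v\}$ gives $c(r,u) + c(u,v) + c(v,r) \le c(T)$, matching the cost of the edge $(u,v)$ in $\aux$. Summing these inequalities over all tours of $\OPT$ containing a big client, and noting that the remaining tours in $\OPT$ have nonnegative cost, we get $\cost(M') \le \sum_{T \in \OPT} c(T) = \opt$. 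Since $M$ is a minimum-cost perfect matching in $\aux$, $\cost(M) \le \cost(M') \le \opt$, as required.

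There is no real obstacle here; the only thing to be careful about is that every big client is accounted for exactly once (which uses that demands are strictly greater than $1/3$, ruling out three big clients per tour) and that the triangle-inequality shortcuts are applied to a tour that genuinely contains $r$ together with the relevant big clients.
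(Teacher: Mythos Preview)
Your proof is correct and follows essentially the same approach as the paper: shortcut each optimal tour past its small clients to obtain, for each tour, either a loop (one big client) or a two-client cycle (two big clients), yielding a perfect matching in $\aux$ of cost at most $\opt$. The paper states this in two sentences; your version simply fills in the details of the case analysis and the triangle-inequality bound.
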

\begin{proof}
Each tour in any feasible solution contains at most two big clients.
So, after shortcutting all tours in $\OPT$ past small clients, we get tours corresponding to a perfect matching $\aux$ with cost at most $\opt$.
\end{proof}
\begin{lemma}\label{3.25: matching cost by D'}
$\cost(M)\leq \rad_{\bbig}$.
\end{lemma}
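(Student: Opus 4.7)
The plan is to exhibit an explicit perfect matching in $\aux$ whose cost is exactly $\rad_{\bbig}$, and then use the minimality of $M$ to conclude.

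First, recall from the construction of $\aux$ that every big client $v \in V_{\bbig}$ has a self-loop of cost $2 \cdot c(r,v)$, which corresponds to the ``singleton tour'' that serves $v$ alone via a round trip to the depot. Consider the matching $M_0$ that consists precisely of the loop at every big client. Since each big client is covered by exactly one edge of $M_0$ (its own loop), $M_0$ is a perfect matching on $\aux$.

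The cost of $M_0$ is $\sum_{v \in V_{\bbig}} 2 \cdot c(r,v)$, which matches the definition $\rad_{\bbig} = \sum_{v \text{ big}} 2 \cdot c(r,v)$ given in Section~\ref{sec: critical lemma}. Because $M$ is a minimum-cost perfect matching on $\aux$, we have $\cost(M) \leq \cost(M_0) = \rad_{\bbig}$, which is the desired inequality.

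The argument is essentially a one-line observation; no case analysis or technical obstacle arises, since the feasibility check $d_u + d_v \leq 1$ from the definition of $\aux$ never needs to be invoked — loops are always available regardless of individual demands.
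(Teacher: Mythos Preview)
Your proof is correct and follows exactly the same approach as the paper: exhibit the perfect matching consisting of all self-loops, observe its cost is $\rad_{\bbig}$, and invoke minimality of $M$.
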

\begin{proof}
Consider all the loops in $\aux$. The cost of all the loops is exactly $\rad_{\bbig}$ and this is a matching so it is an upper bound on the minimum cost of a perfect matching.
\end{proof}

Next, we compute the cost of the first solution in the algorithm. Note that $c(\mathcal{T}')=\cost(M)$. Using an $\alpha$-approximation for TSP, the cost of $\tsp$ is at most $\alpha\cdot\opt$: again we are using the metric property which shows $\opt$ upper bounds the optimum TSP tour since the union of all tours in $\OPT$ is connected and Eulerian. Finally, applying the $\delta$-tank lemma to $\tsp$ results in a solution of cost at most $c(\tsp)+\frac{1}{1-\delta}\cdot\radlower_{\ssmall}$ since there is no big client on $\tsp$. Overall, we have
\begin{equation}\label{3.25: sol 1}
\begin{aligned}
    c(\mathcal{T})&= c(\mathcal{T}')+c(\mathcal{T}'')
    \leq \cost(M) + \alpha\cdot\opt + \frac{3}{2}\cdot\radlower_{\ssmall}.
\end{aligned}
\end{equation}

Next, we compute the cost of the second solution. From the $\delta$-tank lemma, %$c(\mathcal{F})$ is exactly the cost computed in the $\delta$-tank lemma.

\begin{equation}\label{3.25: sol 2}
    c(\mathcal{F})=\alpha\cdot \opt + \frac{3}{2}\cdot\radlower_{\ssmall} + 3\cdot\radlower_{\bbig} - \frac{1}{2}\cdot\rad_{\bbig}.
\end{equation}

%The cheapest of these two costs is at most their average: output by the solution is at most the average of \eqref{3.25: sol 1} and \eqref{3.25: sol 2} we get
Combining these, we bound the cost of the solution output by the algorithm as follows:
\begin{equation}\label{eq final approx factor}
\begin{aligned}
    \min\{c(\mathcal{T}),c(\mathcal{F})\}&\leq \frac{c(\mathcal{T})+c(\mathcal{F})}{2}\\
    & = \frac{2\cdot\alpha\cdot\opt + 3\cdot (\radlower_{\ssmall}+\radlower_{\bbig})+\cost(M)-\frac{1}{2}\cdot\rad_{\bbig}}{2}\\
    &\leq \frac{2\cdot\alpha\cdot\opt + 3\cdot \radlower+\frac{1}{2}\cdot \cost(M)}{2}\\
    &\leq \alpha\cdot\opt+1.5\cdot\opt+0.25\cdot\opt\\
    &=(\alpha+1.75)\cdot\opt,
\end{aligned}
\end{equation}
where the second inequality follows from Lemma \ref{3.25: matching cost by D'} and the last inequality follows from Lemmas \ref{radial lowerbound} \& \ref{3.25: matching cost by opt}. This finishes the proof of Theorem \ref{thm:theorem325}.

% \textbf{option 1:}
% We can further improve this approximation factor by $1.5\cdot\epsilon$ where $\epsilon=\frac{1}{3000}$ is the absolute constant in \cite{Vygen}. If $\radlower\leq (1-\epsilon)\cdot\opt$ then in \eqref{eq final approx factor} substituting this upper bound for $\radlower$ yields an upper bound of $(3.25-1.5\cdot\epsilon)\cdot\opt$ for our solution. On the other hand, if $\radlower\geq (1-\epsilon)\cdot\opt$, then this is the case that \cite{Vygen} calls {\em difficult}. In the difficult case, they provide a TSP tour of cost at most $(\frac{3}{2}-2\cdot\epsilon)\cdot\opt$. Substituting this upper bound for $\alpha$ in \eqref{eq final approx factor}, we have an upper bound of $(3.25-2\cdot\epsilon)\cdot\opt$ on our solution.

% \textbf{option 2:}
% We can further improve this approximation factor by some small but constant factor. If $\radlower\leq (1-\epsilon)\cdot\opt$ where $\epsilon=\frac{1}{3000}$ is the absolute constant in \cite{Vygen}, then \eqref{eq final approx factor} already yields an improvement by some constant factor of $\epsilon$. On the other hand, if $\radlower\geq (1-\epsilon)\cdot\opt$, then this is the case that \cite{Vygen} calls {\em difficult}. In this case, they get a TSP tour whose cost is better than $1.5\cdot\opt$ and the improvement is some constant factor of $\epsilon$. Using this improved bound for $\alpha$ in \eqref{eq final approx factor}, we get an upper bound better than $3.25\cdot\opt$ for our solution again by some constant factor of $\epsilon$.

\section{An Improved LP-Based Approximation%$(3.194+\frac{1}{1-\delta})$-Approximation
}\label{sec: rand alg}
In this section let $\delta$ be a fixed constant in the range $(0, 1/2]$. Smaller $\delta$ lead to better approximations with increased, but still polynomial, running times.
%That is, for any $0<\delta\leq 1/2$, we obtain an approximation factor that depends on $\delta$ and the running time is $n^{O(\frac{1}{\delta})}$. 

Define the small and big clients for this value $\delta$ as in Definition \ref{small and big def}. Let $V_{\bbig}$ be the set of big clients. We consider the following configuration LP for big clients: Let $\mathcal{J}$ be the set of all feasible tours where each tour consists of some big clients and the depot. Note $|\mathcal{J}|$ is bounded by $n^{O(\frac{1}{\delta})}$ as there can be at most $\frac{1}{\delta}$ big clients in each tour. For each $T\in\mathcal{J}$ let $c(T)$ be the cost of tour $T$. For each tour $T\in\mathcal{J}$, we have a variable $x_{T}$ indicating this tour is chosen by the algorithm.

\begin{alignat}{3}\label{config lp}
\text{minimize:} & \quad & \sum_{T\in\mathcal{J}} c(T)\cdot x_T \tag{{\bf Configuration-LP}} \\
\text{subject to:} && \sum_{\substack{T\in \mathcal{J}:\\ v\in T}}x_T \geq \quad & 1 \quad && \forall v \in V_{\bbig} \\
&& x \geq \quad & 0 \notag
\end{alignat}
By shortcutting all tours in the optimum solution past small clients and discarding tours with no big clients, we see there is an integer solution to \eqref{config lp} with cost at most $\opt$. Thus, the optimum LP value provides a lower bound on $\opt$.

Our algorithm independently samples tours spanning large clients using an optimal LP solution. After this, some large clients and all small clients remain uncovered, we cover them using the classic $3.5$-approximation but use the $\delta$-tank tour splitting approach. Algorithm \ref{alg: 3.19} contains the full description of our approach. With foresight, we set $\gamma := \ln(2)$.

\begin{algorithm}[H]
  \caption{$(3.194+\frac{1}{1-\delta})$-approximation}\label{alg: 3.19}
\begin{algorithmic}[1]
    \STATE $\mathcal{T}\gets \emptyset$. \COMMENT{This will be a collection of tours.}
    \STATE Compute an optimal solution $x^*$ of \eqref{config lp}.
    \FOR{$T\in\mathcal{J}$}
    \STATE with probability $\min\{1,\gamma\cdot x_T\}$ add $T$ to $\mathcal T$.
    \ENDFOR
    \STATE Approximate a TSP tour $\tsp$ spanning $\{r\}\cup(V\setminus V(\mathcal{T}))$ where $V(\mathcal{T})$ is the vertices covered in $\mathcal{T}$.
    \STATE Apply the $\delta$-tank lemma to $\tsp$ and let $\mathcal{T}'$ be the resulting collection of tours.
    \STATE Return $\mathcal{T}\cup\mathcal{T}'$.
\end{algorithmic}
\end{algorithm}
It could be that some clients lie on multiple tours due to the randomized rounding step. One can shortcut the tours past repeated occurrences of clients so each client lies on exactly one tour.

\subsection{Analysis}

We first bound the probability of a big client not being covered in the randomized rounding step of Algorithm \ref{alg: 3.19} (steps 3-4).
\begin{lemma}\label{prob big not covered}
For a $v \in V_{\bbig}$, $\Pr[v~is~not~covered~by~\mathcal{T}]\leq e^{-\gamma}$.
\end{lemma}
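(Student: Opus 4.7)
The plan is a direct, standard randomized-rounding analysis. For a fixed big client $v$, since the tours are added to $\mathcal{T}$ independently, I would write
\[
\Pr[v \text{ is not covered by } \mathcal{T}] \;=\; \prod_{T \in \mathcal{J}:\, v \in T} \bigl(1 - \min\{1,\gamma\cdot x_T^*\}\bigr).
\]
Then apply the standard inequality $1 - y \leq e^{-y}$ termwise to convert this product into
\[
\exp\!\left(-\sum_{T \in \mathcal{J}:\, v \in T} \min\{1,\gamma\cdot x_T^*\}\right).
\]
So it suffices to show that the exponent is at least $\gamma$, i.e., that $\sum_{T \ni v} \min\{1,\gamma\cdot x_T^*\} \geq \gamma$.

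For this bound I would split the tours covering $v$ into two groups based on whether $\gamma \cdot x_T^* \geq 1$. If there exists at least one tour $T \ni v$ with $\gamma\cdot x_T^* \geq 1$, that single tour already contributes $1$ to the sum, and since $\gamma = \ln(2) < 1$ we immediately get a sum of at least $\gamma$. Otherwise, every $T \ni v$ satisfies $\min\{1,\gamma\cdot x_T^*\} = \gamma \cdot x_T^*$, and then the LP covering constraint $\sum_{T \ni v} x_T^* \geq 1$ gives
\[
\sum_{T \ni v} \min\{1,\gamma\cdot x_T^*\} \;=\; \gamma \sum_{T \ni v} x_T^* \;\geq\; \gamma.
\]
Combining with the exponential upper bound yields $\Pr[v \text{ not covered}] \leq e^{-\gamma}$, as claimed.

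There is really no substantial obstacle here; the only subtlety is remembering that the capping at $1$ in the rounding probability forces the case split above, and the reason this is painless is precisely that $\gamma = \ln(2) < 1$, so the truncated term alone already dominates $\gamma$. Independence of the tour-sampling steps is what licenses the product form at the outset, so I would state that explicitly before invoking $1-y\leq e^{-y}$.
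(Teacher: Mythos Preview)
Your proof is correct and follows the same approach as the paper: write the non-coverage probability as a product over tours containing $v$, apply $1-y \le e^{-y}$, and use the LP covering constraint. If anything, you are more careful than the paper, which silently drops the $\min\{1,\cdot\}$ truncation; your case split handling the truncated terms (using $\gamma = \ln 2 < 1$) is a nice touch that the paper omits.
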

\begin{proof}
The event that a big client $v$ is not covered is if we do not sample any tour $T$ that contains $v$ in the randomized rounding step. So 
\[
  \Pr[v~is~not~covered~by~\mathcal{T}]=\prod\limits_{T\in\mathcal{T}}(1-\gamma\cdot x_T)\leq e^{-\gamma\cdot\sum\limits_{T\in\mathcal{T}:v\in T}x_T}\leq e^{-\gamma},
\]
where the last bound follows from the constraint in \eqref{config lp} for $v$.
\end{proof}
Next, we bound the expected costs of $\mathcal{T}$ and $\mathcal{T}'$, separately. The cost of $\mathcal{T}$ is bounded as follows:
\begin{equation}\label{T cost}
    \E[\mathcal{T}]=\gamma\cdot\cost(x^*)\leq \gamma\cdot\opt.
\end{equation}
Using the $\delta$-tank lemma, we bound the expected cost of $\mathcal{T}'$ but with the following changes: in \eqref{tour part. cost}, we drop the negative term and we incorporate the fact that a big client is on $\tsp$ with probability at most $e^{-\gamma}$, see Lemma \ref{prob big not covered}.
\begin{equation}
\begin{aligned}\label{T' cost}
    \E[c(\mathcal{T}')]%&\leq c(\tsp)+\frac{1}{1-\delta}\cdot\radlower_{\ssmall} +\E[2\cdot\radlower_{\bbig}]\\
    & \leq c(\tsp)+\frac{1}{1-\delta}\cdot\radlower_{\ssmall}\\
    &+ 2\cdot\sum\limits_{v\in V_{\bbig}}\Pr[v~is~not~covered~by~\mathcal{T}]\cdot \frac{d_v}{1-\delta}\cdot 2\cdot c(r,v)\\
    &= c(\tsp)+\frac{1}{1-\delta}\cdot\radlower_{\ssmall} + e^{-\gamma}\cdot\frac{2}{1-\delta}\cdot\radlower_{\bbig}\\
% \end{aligned}
% \end{equation}
% With our choice of $\gamma = \ln 2$, we have 
% \begin{equation}\label{T' cost}
% \begin{aligned}
    %&= \E[c(\mathcal{T}')]
    &= c(\tsp)+\frac{1}{1-\delta}\cdot\radlower_{\ssmall}+\frac{1}{1-\delta}\cdot\radlower_{\bbig}\\
    &=c(\tsp)+\frac{1}{1-\delta}\cdot\radlower \leq \alpha\cdot \opt+\frac{1}{1-\delta}\cdot\radlower.
\end{aligned}
\end{equation}
The second equality follows from our choice of $\gamma = \ln 2$.
From \eqref{T cost} and \eqref{T' cost}, the expected cost of the solution returned by Algorithm \ref{alg: 3.19} is at most
\begin{equation}
\begin{aligned}
    \E[c(\mathcal{T}\cup\mathcal{T}')]&\leq \ln 2\cdot \opt+\alpha\cdot \opt+\frac{1}{1-\delta}\cdot\radlower\\
    & \leq (\ln 2+\alpha+\frac{1}{1-\delta})\cdot\opt,
\end{aligned}
\end{equation}
where the last inequality follows from Lemma \ref{radial lowerbound}. This finishes the proof of Theorem \ref{thm:LPtheorem}. We briefly comment that this algorithm can be derandomized efficiently using the method of conditional expectation since the probability a big client is covered and its expected contribution to the $\delta$-tank upper bound can be computed efficiently even if some tours have been sampled or rejected so far. Note there is a numerical issue in that $\gamma \cdot x_T$ may not be a rational number, but this error can be absorbed in the $\frac{1}{1-\delta}$ part of the guarantee by choosing $\delta$ to be slightly smaller.

\subsection{Further Improvements using the Blauth, Traub, and Vygen Approach}
The approach in \cite{VygenIPCO} first considers that the classic 3.5-approximation is in fact better if $\radlower$ is smaller than $\opt$ by some constant factor, say $\radlower \leq (1-\delta') \cdot \opt$. In the other case $\radlower > (1-\delta') \cdot \opt$, they show in fact that one can compute a TSP tour
of cost close to $\opt$ if $\delta'$ is sufficiently small. In particular, they show the following
in their full paper.

\begin{theorem}[Blauth, Traub, and Vygen \cite{VygenIPCO}, Theorem 23]\label{thm:vygen}
There is a function $f: \mathbb R_{> 0} \rightarrow \mathbb R_{> 0}$ with $\lim_{\delta' \rightarrow 0} f(\delta') = 0$ and a polynomial time algorithm for CVRP that, for any $\delta' > 0$, returns a solution of cost $(3 + f(\delta')) \cdot \opt$ for any instance with $\radlower \geq (1-\delta') \cdot \opt$.
\end{theorem}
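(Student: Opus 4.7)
\emph{Proof proposal.} The plan is to reduce the statement to constructing a near-optimal TSP tour and then invoke the $\delta$-tank lemma (Lemma~\ref{tour partitioning}) with $\delta = 0$. Concretely, suppose we can compute a TSP tour $\tsp$ on $V \cup \{r\}$ with $c(\tsp) \leq (1 + h(\delta')) \cdot \opt$ for some function $h$ satisfying $\lim_{\delta' \to 0} h(\delta') = 0$. With $\delta = 0$ all clients are treated as big, so Lemma~\ref{tour partitioning} specializes to an additive $2\radlower$ term and yields a feasible solution of cost at most $c(\tsp) + 2\radlower \leq (1 + h(\delta'))\opt + 2\opt = (3 + h(\delta'))\opt$, where the second inequality uses Lemma~\ref{radial lowerbound}. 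Setting $f := h$ then proves the theorem.

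The hypothesis $\radlower \geq (1 - \delta')\opt$ is very restrictive, and two observations make it actionable. First, the optimum TSP value $\opt_{TSP}$ is already at most $\opt$: the union of the tours in any optimal CVRP solution forms a connected Eulerian multigraph (the depot has even degree equal to twice the number of tours, while every other vertex has degree two), so shortcutting an Eulerian traversal yields a Hamiltonian cycle of cost at most $\opt$ by the triangle inequality. Combined with $\opt_{TSP} \geq \radlower$, this sandwiches $\opt_{TSP}$ in the narrow interval $[(1-\delta')\opt, \opt]$. Second, summing the Haimovich--Kan inequality across the optimal tours gives a total slack $\sum_{T \in \OPT} \big( c(T) - 2\sum_{v \in T} d_v c(r,v) \big) \leq \delta' \opt$, forcing all but a $\delta'$-fraction of $\opt$ to come from tours that are structurally close to out-and-back ``spokes'' through the depot.

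The main obstacle is turning this structural observation into a polynomial-time algorithm that returns a TSP tour of cost $(1 + h(\delta')) \cdot \opt$: Christofides--Serdyukov only guarantees $1.5 \cdot \opt_{TSP}$, which is insufficient. My plan is to start from a minimum spanning tree $T^\star$ (of cost at most $\opt_{TSP}$) and augment it into a connected Eulerian subgraph via a minimum $T$-join on its odd-degree vertices, then shortcut. To bound the $T$-join cost beyond the trivial $\tfrac{1}{2}\opt_{TSP}$, the idea is to exhibit a cheap feasible point in the $T$-join polyhedron obtained as a fractional combination of the edges used by the near-tight OPT tours; since those tours are close to spokes, the corresponding augmentation is essentially a sum of edges of the form $\{r,v\}$ whose total cost is controlled by the slack budget $\delta'\opt$. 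Quantifying $h(\delta') \to 0$ then reduces to a case split between the bulk of OPT tours (where Haimovich--Kan is nearly tight, so the spoke approximation is accurate) and an exceptional set whose total slack is $O(\delta' \opt)$ and can be paid for directly. This quantitative step, where the structural near-tightness has to be converted into an approximate integrality gap for TSP on the given metric, is where I expect the bulk of the technical work to lie.
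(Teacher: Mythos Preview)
The paper does not prove this theorem at all: it is quoted verbatim from Blauth, Traub, and Vygen \cite{VygenIPCO} (their Theorem~23) and used as a black box in Section~\ref{sec: rand alg}. So there is no in-paper proof to compare your proposal against.

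On the proposal itself, the outer reduction is correct and is indeed how \cite{VygenIPCO} proceeds: obtain a TSP tour of cost $(1+h(\delta'))\opt$ and then run classical tour partitioning (your $\delta=0$ instance of Lemma~\ref{tour partitioning}) to pay an additional $2\radlower \le 2\opt$. But your attempt to get at the TSP tour contains a genuine error. The inequality $\opt_{TSP} \ge \radlower$ is false: take $n$ clients each with demand $1$, all at distance $1$ from $r$ and distance $0$ from one another. Then $\radlower = 2n$ while $\opt_{TSP} = 2$. (Here $\opt = 2n$ as well, so the hypothesis $\radlower \ge (1-\delta')\opt$ holds trivially.) Hence $\opt_{TSP}$ is \emph{not} sandwiched in $[(1-\delta')\opt,\opt]$, and the Christofides-plus-cheap-$T$-join plan built on that sandwich does not get off the ground.

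What \cite{VygenIPCO} actually do is different: they never argue that $\opt_{TSP}$ is close to $\opt$. Instead they work directly with the optimal CVRP tours and exploit the per-tour slack $c(T) - 2\sum_{v\in T} d_v\,c(r,v)$, whose total is at most $\delta'\opt$. Their key structural lemma is that a tour with small slack has all its clients clustered near a single ``peak'' vertex; this lets them replace most of the OPT tours by cheap spokes and stitch the remainder together into a connected Eulerian multigraph of cost $(1+h(\delta'))\opt$. Your proposal correctly identifies the slack budget and the spoke intuition, but the conversion into an actual tour is precisely the nontrivial content of \cite{VygenIPCO}, and the one concrete step you offer toward it rests on the incorrect lower bound above.
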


They show taking the better of the standard 3.5-approximation for CVRP, which really finds a solution with cost $\alpha \cdot \opt + 2 \cdot \radlower$, and the algorithm from Theorem \ref{thm:vygen} produces a solution with cost $(\alpha + 2 \cdot (1 - \epsilon))\cdot \opt$ for some absolute constant $\epsilon > 0$.

This extends to our setting as well. If $\radlower \leq (1-\delta') \cdot \opt$ then both of our algorithms perform better than the bound we provide by a small factor of $\opt$. Otherwise, we can use the algorithm in \cite{VygenIPCO} to get a solution with cost $(3 + f(\delta'))\cdot \opt$.
Stating this formally, we have the following.

\begin{theorem}
Suppose there is an $\alpha$-approximation for TSP. Let $f$ be the function from Theorem \ref{thm:vygen}.
Then for any constant $\delta > 0$ there is an approximation for CVRP with unsplittable demands with guarantee $\min_{0 < \delta' < 1} \min\left\{3 + f(\delta'), \ln 2 + \alpha + \frac{1-\delta'}{1-\delta}\right\}$ with running time $n^{O(\frac{1}{\delta})}$.
\end{theorem}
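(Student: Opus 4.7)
The plan is to run both Algorithm \ref{alg: 3.19} (with the given parameter $\delta$) and the algorithm of Blauth, Traub, and Vygen promised by Theorem \ref{thm:vygen} (with parameter $\delta'$) on the input instance, and return whichever of the two produces the cheaper solution. The analysis proceeds by a case split governed by the size of $\radlower$ relative to $\opt$, with the threshold $(1-\delta')\cdot\opt$ chosen precisely so that the two algorithms cover complementary regimes.

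In the case $\radlower \geq (1-\delta')\cdot\opt$, the hypothesis of Theorem \ref{thm:vygen} holds, so the BTV algorithm returns a feasible CVRP solution of cost at most $(3 + f(\delta'))\cdot\opt$. In the complementary case $\radlower \leq (1-\delta')\cdot\opt$, I would revisit the cost analysis of Algorithm \ref{alg: 3.19}, combining \eqref{T cost} with \eqref{T' cost} \emph{before} the last step invoking Lemma \ref{radial lowerbound}, which reads
\[
  \E[c(\mathcal{T}\cup\mathcal{T}')] \;\leq\; \ln 2 \cdot \opt + \alpha\cdot \opt + \tfrac{1}{1-\delta}\cdot \radlower.
\]
Substituting the case hypothesis directly into the last term (instead of the crude bound $\radlower \leq \opt$) yields a bound of $\bigl(\ln 2 + \alpha + \tfrac{1-\delta'}{1-\delta}\bigr)\cdot \opt$ on the expected cost of Algorithm \ref{alg: 3.19}.

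Since the combined procedure returns the minimum of the two outputs, in either case the returned solution has cost at most the bound active in that case, so the approximation guarantee follows from taking the (outer) minimization over the choice of $\delta'$. The running time is dominated by Algorithm \ref{alg: 3.19} and is thus $n^{O(1/\delta)}$, while the BTV algorithm only adds polynomial overhead. There is no serious technical obstacle here: the two ingredients were designed to handle precisely the two complementary regimes of $\radlower$, and the randomization in Algorithm \ref{alg: 3.19} can be removed via the conditional-expectation derandomization noted immediately after the proof of Theorem \ref{thm:LPtheorem}.
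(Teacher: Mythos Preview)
Your proposal is correct and follows essentially the same approach as the paper: run both Algorithm \ref{alg: 3.19} and the Blauth--Traub--Vygen algorithm, then argue by a case split on whether $\radlower \le (1-\delta')\cdot\opt$, using the refined bound $(\ln 2 + \alpha)\cdot\opt + \tfrac{1}{1-\delta}\cdot\radlower$ in the first case and Theorem \ref{thm:vygen} in the second. The paper's own proof is slightly terser but the argument is identical.
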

\begin{proof}
The proof of Algorithm \ref{alg: 3.19} shows the cost of the solution produced is at most $(\ln 2 + \alpha) \cdot \opt + \frac{1}{1-\delta}\cdot \radlower$. For any $\delta'$, if $\radlower \leq (1-\delta') \cdot \opt$ then in fact Algorithm \ref{alg: 3.19} is a $\left(\ln 2 + \alpha + \frac{1-\delta'}{1-\delta}\right)-approximation$. Otherwise, if $\radlower > (1-\delta') \cdot \opt$ then the algorithm from Theorem \ref{thm:vygen} is a $(3 + f(\delta'))$-approximation. This holds for any 
$0 < \delta' < 1$.
\end{proof}

\begin{corollary}
Suppose for some constant $\alpha > 2 - \ln 2$ there is an $\alpha$-approximation for TSP\footnote{Any $\alpha \geq 4/3$ meets this criteria.}.
Then there is a constant $\epsilon' > 0$ such that CVRP with unsplittable demands admits a $(\ln 2 + \alpha + 1 - \epsilon')$-approximation.
\end{corollary}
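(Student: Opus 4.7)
My plan is to directly apply the previous theorem by tuning the two parameters $\delta$ and $\delta'$ so that both expressions inside its outer minimum are strictly less than $\ln 2 + \alpha + 1$, and then read off a uniform constant slack $\epsilon' > 0$. The intuition is that the bound $3 + f(\delta')$ approaches $3$ as $\delta' \to 0$, while the hypothesis $\alpha > 2 - \ln 2$ is exactly what guarantees $3 < \ln 2 + \alpha + 1$; this gives me room to pay the $f(\delta')$ overhead. Simultaneously, once $\delta'$ is fixed, I can drive the factor $\frac{1-\delta'}{1-\delta}$ strictly below $1$ by choosing $\delta$ much smaller than $\delta'$.

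Concretely, let $\kappa := \alpha + \ln 2 - 2 > 0$ by hypothesis. I would first choose $\delta' > 0$ small enough that $f(\delta') \leq \kappa/2$; this is possible because $f(\delta') \to 0$ as $\delta' \to 0$. For this fixed $\delta'$, the first expression satisfies
\[
3 + f(\delta') \;\leq\; 3 + \kappa/2 \;=\; \ln 2 + \alpha + 1 - \kappa/2.
\]
Next I would pick $\delta > 0$ with $\delta < \delta'$ small enough that
\[
\frac{1-\delta'}{1-\delta} \;\leq\; 1 - \frac{\delta'}{2},
\]
which holds, for example, whenever $\delta \leq \delta'/2$ (a short calculation). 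Under this choice the second expression satisfies
\[
\ln 2 + \alpha + \frac{1-\delta'}{1-\delta} \;\leq\; \ln 2 + \alpha + 1 - \frac{\delta'}{2}.
\]

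Setting $\epsilon' := \min\{\kappa/2,\, \delta'/2\} > 0$, both quantities in the outer minimum of the previous theorem are at most $\ln 2 + \alpha + 1 - \epsilon'$, so the theorem yields a $(\ln 2 + \alpha + 1 - \epsilon')$-approximation, with running time $n^{O(1/\delta)}$, which is still polynomial since $\delta$ is a constant. I do not expect any genuine obstacle here: the only subtlety is verifying that the hypothesis $\alpha > 2 - \ln 2$ is used precisely to ensure $\kappa > 0$, i.e., that the $3 + f(\delta')$ branch can be made strictly smaller than $\ln 2 + \alpha + 1$. The footnote that $\alpha \geq 4/3$ satisfies the hypothesis is a sanity check, since $4/3 > 2 - \ln 2 \approx 1.307$.
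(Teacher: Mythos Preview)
Your proposal is correct and follows essentially the same approach as the paper: first pick $\delta'$ small enough that $3 + f(\delta') < \ln 2 + \alpha + 1$ (using the hypothesis $\alpha > 2 - \ln 2$), then pick $\delta$ small enough that $\frac{1-\delta'}{1-\delta} < 1$, and take $\epsilon'$ to be the resulting slack. The only difference is that you make the constants fully explicit ($\kappa = \alpha + \ln 2 - 2$, $\delta \le \delta'/2$, $\epsilon' = \min\{\kappa/2,\delta'/2\}$), whereas the paper leaves them implicit.
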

\begin{proof}
For small enough $\delta'$, we have $3 + f(\delta') < \ln 2 + \alpha + 1$. Fix such a $\delta'$, then we can find $\delta$ small enough as well so that $\ln 2 + \alpha + \frac{1-\delta'}{1-\delta}$ is also smaller than $\ln 2 + \alpha + 1$. Thus, using this $\delta$ and taking the better of the two algorithms produces a solution whose cost is a constant-factor smaller than $(\ln 2 + \alpha + 1) \cdot \opt$, as required.
\end{proof}
We have not computed the exact improvement to our approximation guarantees, it seems to be in the order of $10^{-3}$ as in \cite{VygenIPCO}.

%Thus, taking the better of our algorithm and their algorithm would improve our ratios very slightly. We have not computed the exact improvement to our approximation guarantees, it seems to be in the order of $10^{-3}$.

% We can further improve this approximation factor by $1.5\cdot\epsilon$ where $\epsilon=\frac{1}{3000}$ is the constant in \cite{Vygen}, using a similar argument as in the end of Section \ref{sec: simple alg}.

\bibliographystyle{alpha}
\bibliography{mybibliography}

\end{document}